\newcommand{\M}{\ensuremath{\mathcal{M}}}
\newtheorem {theorem} {Theorem}
\newtheorem {lemma} {Lemma}
\journal{Discrete Mathematics}
\begin{document}

\begin{frontmatter}

%\title{Matching points with diametral disks \tnoteref{label0}}
%\tnotetext[label0]{This is only an example}
%\title{Matching points with diametral disks}
%\title{Matching points with common intersection disks}
\title{Matching points with disks with a common intersection}

\author[label1]{Clemens Huemer}
\address[label1]{Departament de Matem\`atiques, Universitat Polit\`ecnica de Catalunya, Spain}

% ..... \corref{cor1}
%\cortext[cor1]{I am corresponding author}

\ead{clemens.huemer@upc.edu}

\author[label2]{Pablo P\'erez-Lantero}
\address[label2]{Departamento de Matem\'atica y Ciencia de la Computaci\'on, Universidad de Santiago, Chile}
\ead{pablo.perez.l@usach.cl}

\author[label1]{Carlos Seara}
\ead{carlos.seara@upc.edu}

\author[label1]{Rodrigo I. Silveira}
\ead{rodrigo.silveira@upc.edu}

\begin{abstract}
%We consider matchings between a set $R$ of red points and a set $B$ of blue points with diametral disks. 
We consider matchings with diametral disks between two sets of points $R$ and $B$.
More precisely, 
for each pair of matched points $p \in R$ and $q \in B$, we consider the disk through $p$ and $q$ with the smallest diameter.
We prove that for any $R$ and $B$ such that $|R|=|B|$, there exists a perfect matching such that the diametral disks of the matched point pairs
have a common intersection. In fact, our result is stronger, and shows that a maximum weight perfect matching has this property.
\end{abstract}

\end{frontmatter}

%%
%% Start line numbering here if you want
%%
% \linenumbers

%% main text

\section{Introduction}

We consider two sets of $n\ge 2$ points in the plane, $R$ and $B$, that are assumed to be disjoint.
We call the points in $R$ \emph{red}, and those in $B$  \emph{blue}.
A well-known family of problems involving red and blue points is that of matching points with (pairwise disjoint) geometric objects.
The goal is to find pairs of points such that each pair is associated with a geometric object that covers both points of the pair, 
and all associated objects are pairwise disjoint. In each pair the two points are restricted to be of different colors, 
or in each pair the two points are restricted to be of the same color.
This class of problems is well studied in discrete and computational geometry, starting from the classic result that $n$ red points and $n$ blue points
can always be perfectly matched with $n$ pairwise non-crossing segments, where each segment connects a red point with a blue point~\cite{larson1983problem}.
The study has been continued in plenty of directions, for both the monochromatic and bichromatic versions, 
by using pairwise disjoint segments~\cite{aloupis2015,dumitrescu2001}, rectangles and squares~\cite{abrego2004matching,abrego2009,bereg2009,caraballo2014matching},
and more general geometric objects~\cite{AloupisCCDDDMHHLSST13}.

More formally, for $R=\{p_1,\dots,p_n\}$ and $B=\{q_1,\dots,q_n\}$, a {\em matching} of $R\cup B$ is a partition of $R\cup B$ into $n$ pairs such that each pair consists of a red and a blue point. A point $p\in R$ and a point $q\in B$ are {\em matched} if and only if the pair $(p,q)$ is in the
matching.

We use $pq$ to denote the segment connecting $p$ and $q$, and $|pq|$ to denote its length.
The {\em diametral disk} of $pq$, denoted $D_{pq}$, is the disk with diameter equal to $|pq|$ that is centered at the midpoint of $pq$, while  $\mathcal{C}_{pq}$ is the corresponding circle.
  For a matching $\M$, we use $D_\M$ to denote the set of disks associated with the matching, that is: $D_\M = \{D_{pq} \mid (p,q) \in \M \}$.

\begin{wrapfigure}{r}{0.4\textwidth}
%\begin{figure}[t]
	\centering
	\includegraphics[scale=1,page=4]{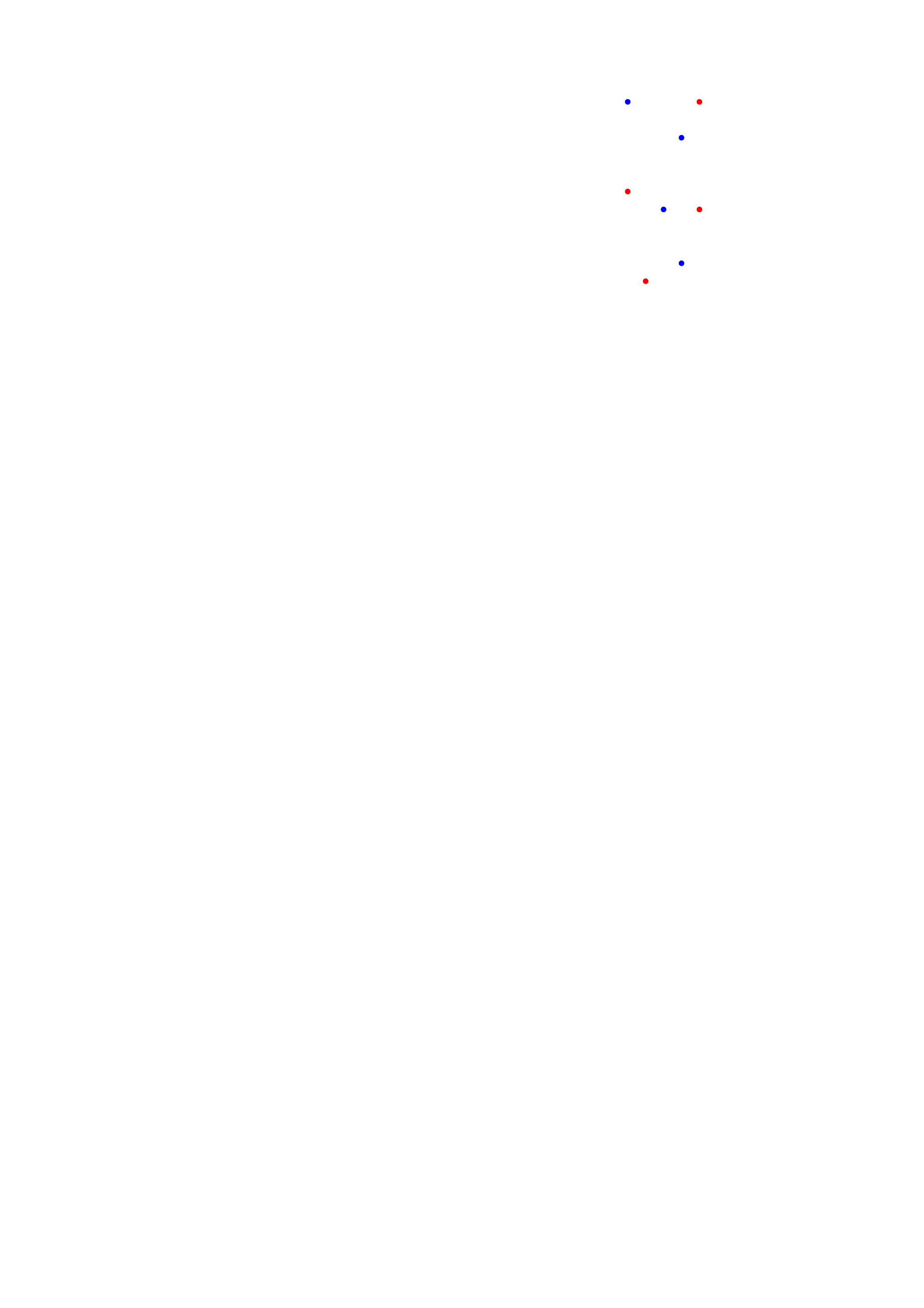}
	\caption{\small{
		Example for a set of $n=4$ red and blue points,
		showing a matching and the associated disks,
		which have a common intersection.}}
	\label{fig:example}
%\end{figure}
\end{wrapfigure}
In this paper, we prove that for any $R$ and $B$ as above, there always exists a matching $\M$ such that all disks in $D_\M$ have a common intersection (see Figure~\ref{fig:example}). More precisely, we show that any \emph{maximum matching} satisfies this property. A matching $\M$ of $R\cup B$ is {\em maximum} if it maximizes the sum of the squared distances between the matched points, that is, it maximizes $\sum_{(p,q)\in\M}|pq|^2$.

Observe that our result goes in the direction opposite to that of known results on matching red and blue points: Our goal is that all matching objects have a common intersection, whereas in
previous work (e.g., \cite{aloupis2015,dumitrescu2001,abrego2004matching,abrego2009,bereg2009,caraballo2014matching,AloupisCCDDDMHHLSST13}, and the references in~\cite{AloupisCCDDDMHHLSST13}) it is  required that all matching objects are pairwise disjoint.

Moreover, in order for the problem to make sense, the object used for the matching is important.
Segments---arguably the simplest geometric object defined by two points---do not work.
That is, when matching points of different colors with segments, it is not always possible to guarantee that all matching segments are pairwise intersecting (e.g.,\ consider two red and two blue points not in convex position).
This also happens when matching with axis-aligned rectangles, where for any two matched points the associated rectangle is the one with minimum-area that contains both points.
For these reasons, we focus on matching with disks.
The choice of diametral disks comes from the need of bounding the size of the disks. Otherwise, if disks can be arbitrarily large, the result becomes trivially true.
Furthermore, diametral disks are a natural choice for disks that must be defined by two points.

\begin{wrapfigure}{r}{0.4\textwidth}
%\begin{figure}[t]
	\centering
	\includegraphics[scale=0.3,trim=5cm 9cm 11cm 4cm, clip]{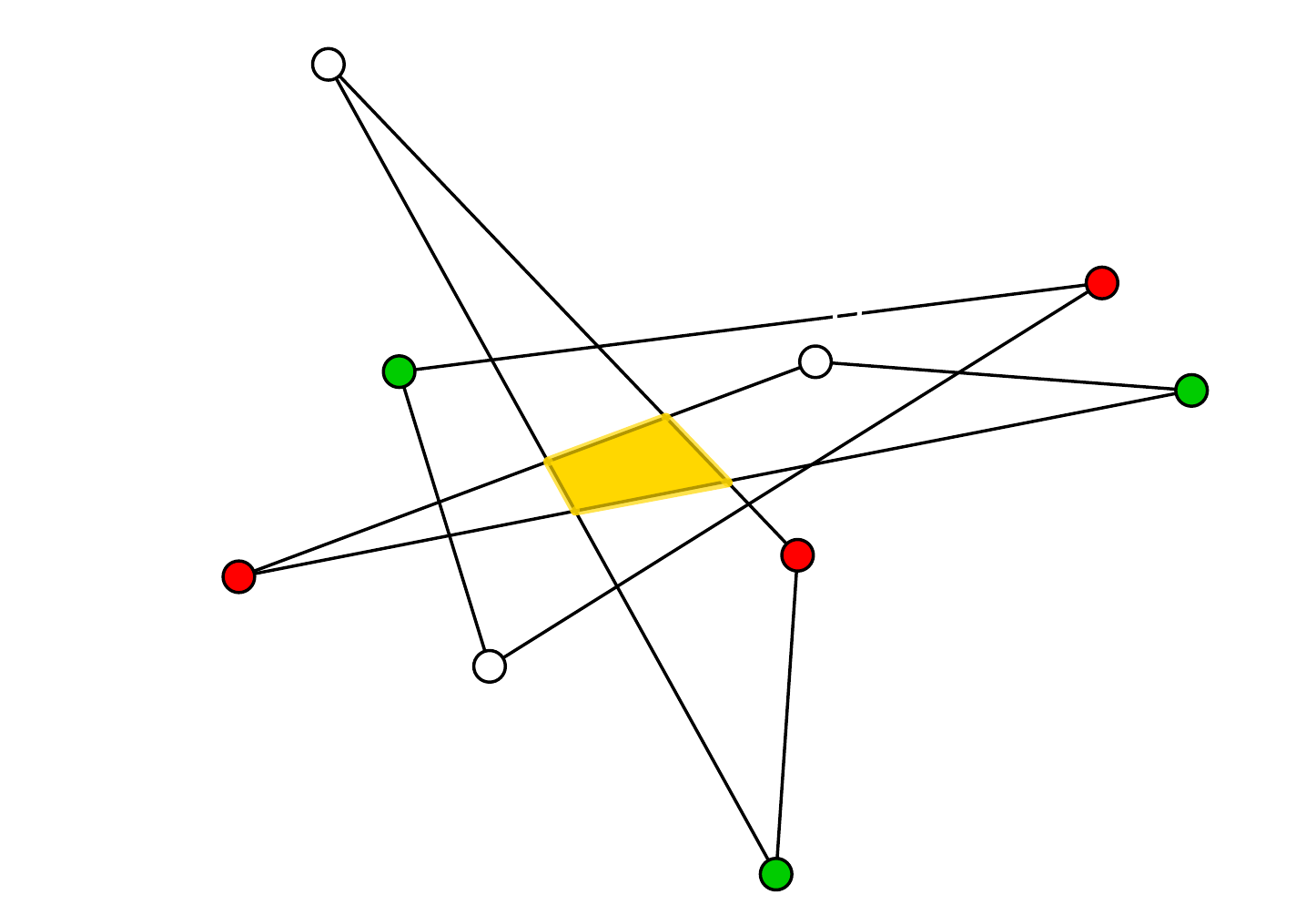}
	\caption{\small{
		Example for a set of $n=3$ red, white, and green points,
		showing a partition into three-colored triangles,
		which have a common intersection.}}
	\label{fig:trianglesintersect}
%\end{figure}
\end{wrapfigure}
Our result is also motivated from Tverberg's theorem~\cite{tverberg1966}, which states that any $(d+1)(r-1)+1$ points in $\mathbb{R}^d$ can be partitioned 
into $r$ subsets such that the convex hulls of the subsets have a point in common. A colored version of this theorem for points in the plane was 
proved in~\cite{barany2}, implying that any set of $n$ red, $n$ white, and $n$ green points in the plane can be partitioned into $n$ triangles, with one vertex from each color, 
such that the triangles have a point in common. See Figure~\ref{fig:trianglesintersect}. Many extensions and related results are known, 
see for instance~\cite{arocha,barany,holmsen,loera}.

Here we consider diametral disks instead of convex hulls. Hence, our result can be put into the context of Tverberg type theorems.
Other related results, albeit for different problems, include those in~\cite{SmorodinskyS04,AshokGR16,akiyama,barany3,prodromou}.

%\clemens{Creo conviene buscar un poco más de literatura relacionada, por ejemplo del estilo:\\
%-Selection Lemmas for Various Geometric Objects by Pradeesha Ashok, Sathish Govindarajan, Ninad Rajgopal\\
%-Selecting Heavily Covered Points by Pseudo-circles, Spheres and Rectangles by Shakhar Smorodinsky , Micha Sharir }

%\reviewer{I think the problem should be a bit better motivated since the goal, as the authors say, goes in the direction opposite to all known results on the topic.}

\medskip
\noindent {\bf Outline.} We begin by introducing some additional notation. After that, we consider a maximum matching $\M$ of $R\cup B$, and prove in Section~\ref{sec:proofs_1} that any pair of disks in $D_\M$ intersect. Finally, in Section~\ref{sec:proofs_2}, we prove that all disks in $D_\M$ must intersect.

\medskip
\noindent {\bf Notation.} For a point $p$, let $x(p)$ and $y(p)$ denote the $x$- and $y$-coordinates of $p$, respectively. Given three different points $p$, $q$, and $r$, let $\ell(p,q)$ denote the line containing both $p$ and $q$, $\Delta pqr$ the triangle with vertex set $\{p,q,r\}$, $\angle pqr$ the angle at $q$ in the triangle $\Delta pqr$.
Finally, we will say that a set of points is in \emph{general position} if no three points of the set are collinear.

\section{Any two disks in $D_\M$ intersect }
\label{sec:proofs_1}

We begin by showing in this section that in a maximum matching \M, any pair of disks in  $D_\M$  intersect. To that end, we first prove the following auxiliary result that concerns only four points.
This result will also be the key for proving our main technical result, Lemma~\ref{lem:3matching}.

\begin{lemma}\label{lem:2-matching}
Let $p_1,p_2\in R$ and $q_1,q_2\in B$ such that $\{(p_1,q_1),(p_2,q_2)\}$ is a maximum matching for $\{p_1,p_2,q_1,q_2\}$. Suppose further that $y(p_1)=y(p_2)$, and $x(p_1) < x(p_2)$. Then, $x(q_2) \leq x(q_1)$.
\end{lemma}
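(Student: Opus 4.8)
The plan is to exploit the fact that a maximum matching maximizes the sum of \emph{squared} distances, which turns the optimality condition into a clean quadratic inequality. Place coordinates so that the two red points lie on a common horizontal line: write $p_1 = (a_1, h)$ and $p_2 = (a_2, h)$ with $a_1 < a_2$, and let $q_1 = (b_1, c_1)$ and $q_2 = (b_2, c_2)$ be the two blue points. The desired conclusion $x(q_2) \le x(q_1)$ is precisely the statement $b_2 \le b_1$, so the whole matter reduces to controlling the sign of $b_2 - b_1$.

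Since $\{(p_1,q_1),(p_2,q_2)\}$ is a maximum matching for these four points, swapping the partners can only weakly decrease the total squared length, which gives
\begin{equation}
|p_1q_1|^2 + |p_2q_2|^2 \ \ge\ |p_1q_2|^2 + |p_2q_1|^2.
\end{equation}
The key observation is that because $y(p_1) = y(p_2) = h$, every vertical contribution cancels between the two sides: on the left the pairs contribute $(h-c_1)^2$ and $(h-c_2)^2$, and on the right exactly the same two terms reappear with the partners swapped. Hence only the horizontal parts survive, and the inequality collapses to
\begin{equation}
(a_1 - b_1)^2 + (a_2 - b_2)^2 \ \ge\ (a_1 - b_2)^2 + (a_2 - b_1)^2.
\end{equation}

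Expanding both sides and cancelling the pure squares $a_i^2$ and $b_i^2$, the surviving cross terms simplify to the single inequality $2(a_1 - a_2)(b_2 - b_1) \ge 0$. Because $a_1 < a_2$, the factor $a_1 - a_2$ is strictly negative, so this forces $b_2 - b_1 \le 0$, that is, $x(q_2) \le x(q_1)$, as claimed. I expect no genuine obstacle in this argument: it is a short, self-contained computation, and the only thing requiring care is the bookkeeping of which vertical terms cancel. That cancellation is exactly where the hypothesis $y(p_1) = y(p_2)$ does all the work, and it is what makes the \emph{squared} distance (rather than the distance itself) the natural quantity to maximize.
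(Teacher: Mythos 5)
Your proof is correct, and it is essentially the same argument as in the paper: both reduce the maximum-matching inequality to $|p_1q_1|^2-|p_2q_1|^2 \ge |p_1q_2|^2-|p_2q_2|^2$ and observe that, since $y(p_1)=y(p_2)$, this quantity depends only (and affinely, with positive coefficient) on the $x$-coordinate of the blue point. The paper phrases this via the locus $\{r : |rp_1|^2-|rp_2|^2=c\}$ being a vertical line after normalizing $p_1=(-1,0)$, $p_2=(1,0)$, whereas you expand and cancel directly, but the computation is the same.
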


\begin{proof}
Assume w.l.o.g. that $p_1=(-1,0)$ and $p_2=(1,0)$.
Refer to Figure~\ref{fig:2-matching}(a).
 Given a constant $c$, the points $r=(x,y)$ that satisfy $|rp_1|^2-|rp_2|^2=c$ are those such that $(x+1)^2+y^2-(x-1)^2 - y^2=c$, which is equivalent to $4x=c$. Then, the locus of such points is the vertical line $x=c/4$. Since $\{(p_1,q_1),(p_2,q_2)\}$ is a maximum matching, we have that
\[
    |p_1q_1|^2 + |p_2q_2|^2  \ge |p_1q_2|^2 + |p_2q_1|^2  \iff
    |p_1q_1|^2 - |p_2q_1|^2  \ge |p_1q_2|^2 - |p_2q_2|^2.
  \]

Let $d_1=|p_1q_1|^2 - |p_2q_1|^2$ and  $d_2=|p_1q_2|^2 - |p_2q_2|^2$. Note that the vertical line through $q_1$ is the line $x=d_1/4$, thus $x(q_1)=d_1/4$, and analogously, $x(q_2)=d_2/4$. Since $d_2 \le d_1$, we have $x(q_2) \leq x(q_1)$.
%\reviewer{I think it suffices to give the second inequality (to have some extra space if needed}
\end{proof}

\begin{figure}[t]
	\centering
	\includegraphics{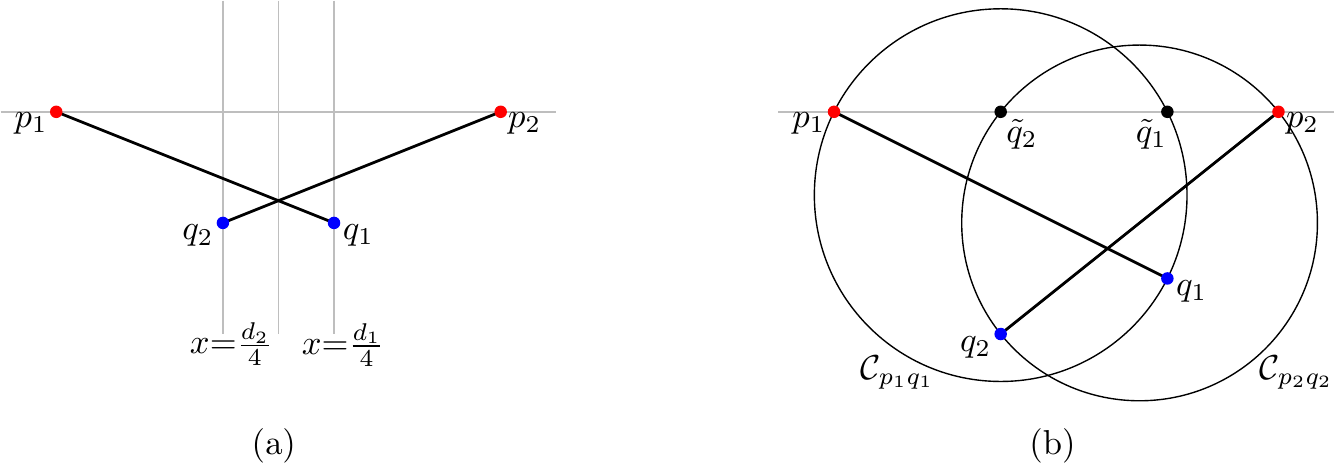}
	\caption{\small{
		(a) Illustration of Lemma~\ref{lem:2-matching}. (b) Illustration of Lemma~\ref{lem:pairwise-intersection}.
	}}
	\label{fig:2-matching}
\end{figure}

Now we can prove that in a maximum matching for four points, the two disks intersect.

\begin{lemma}\label{lem:pairwise-intersection}
Let $p_1,p_2\in R$ and $q_1,q_2\in B$ such that $\{(p_1,q_1),(p_2,q_2)\}$ is a maximum matching for $\{p_1,p_2,q_1,q_2\}$. Then, $D_{p_1q_1}\cap D_{p_2q_2}\neq \emptyset$.
\end{lemma}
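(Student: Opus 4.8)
The plan is to reduce both the hypothesis and the conclusion to clean vector inequalities and then connect them. Two closed disks meet exactly when the distance between their centers is at most the sum of their radii. The center of $D_{p_1q_1}$ is the midpoint $m_1=\tfrac12(p_1+q_1)$ with radius $\tfrac12|p_1q_1|$, and similarly $m_2=\tfrac12(p_2+q_2)$ with radius $\tfrac12|p_2q_2|$. Hence $D_{p_1q_1}\cap D_{p_2q_2}\neq\emptyset$ is equivalent to
\[
  |m_1-m_2|\le \tfrac12\bigl(|p_1q_1|+|p_2q_2|\bigr),
\]
which, writing $a=p_1-p_2$ and $b=q_1-q_2$ and noting that $2(m_1-m_2)=a+b$, becomes $|a+b|\le |p_1q_1|+|p_2q_2|$.

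Next I would rewrite the maximum-matching hypothesis in the same language. Expanding the defining inequality $|p_1q_1|^2+|p_2q_2|^2\ge |p_1q_2|^2+|p_2q_1|^2$ (exactly the kind of difference-of-squared-distances computation already used in the proof of Lemma~\ref{lem:2-matching}) gives $|p_1q_2|^2+|p_2q_1|^2-|p_1q_1|^2-|p_2q_2|^2=2\,(p_1-p_2)\cdot(q_1-q_2)$, so the hypothesis is equivalent to the single dot-product condition $a\cdot b\le 0$.

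Finally I would combine the two. Since $a-b=(p_1-q_1)-(p_2-q_2)$, the triangle inequality gives $|a-b|\le |p_1q_1|+|p_2q_2|$. On the other hand, from $a\cdot b\le 0$ we get $|a+b|^2=|a-b|^2+4\,(a\cdot b)\le |a-b|^2$, hence $|a+b|\le |a-b|$. Chaining these, $|a+b|\le|a-b|\le |p_1q_1|+|p_2q_2|$, which is precisely the center--radius inequality above; therefore the two disks intersect.

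The work here is essentially in spotting the two reformulations; once both the hypothesis and the goal are phrased through $a=p_1-p_2$ and $b=q_1-q_2$, the argument is a two-line chain built on the elementary identity $|a+b|^2-|a-b|^2=4\,(a\cdot b)$ together with the triangle inequality. The only point requiring a little care is the disk-intersection criterion itself: I would state it for closed disks, so that the containment case, where one disk lies inside the other, is correctly included, but this is covered by the same inequality $|m_1-m_2|\le r_1+r_2$ and poses no real obstacle.
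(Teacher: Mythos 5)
Your proof is correct, and it takes a genuinely different route from the paper's. The paper proves this lemma geometrically: it first establishes Lemma~\ref{lem:2-matching} (normalizing $p_1,p_2$ to a horizontal line and using the locus $|rp_1|^2-|rp_2|^2=c$ to show $x(q_2)\le x(q_1)$), and then projects $q_1,q_2$ orthogonally onto $\ell(p_1,p_2)$; Thales' theorem identifies $D_{p_iq_i}\cap\ell(p_1,p_2)$ with the segment $p_i\tilde{q_i}$, and the ordering of the projections forces these two segments to overlap, producing a common point on the line $\ell(p_1,p_2)$ itself. You instead reformulate everything in vector language: the maximality hypothesis becomes $(p_1-p_2)\cdot(q_1-q_2)\le 0$ (your expansion is right; it is the same difference-of-squares computation the paper hides inside Lemma~\ref{lem:2-matching}), and the conclusion becomes the center--radius inequality $|a+b|\le|p_1q_1|+|p_2q_2|$ with $a=p_1-p_2$, $b=q_1-q_2$, which follows from $|a+b|\le|a-b|$ (the parallelogram-type identity plus $a\cdot b\le 0$) and the triangle inequality applied to $a-b=(p_1-q_1)-(p_2-q_2)$. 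Your argument is shorter, coordinate-free, and self-contained --- in particular it does not need Lemma~\ref{lem:2-matching} at all for this statement. What the paper's route buys in exchange is structural information that is reused later: Lemma~\ref{lem:2-matching} and the ``project onto the line and use Thales'' picture are exactly the ingredients of the three-disk argument in Lemma~\ref{lem:3matching}, so the paper cannot dispense with them even though your proof shows they are not needed for pairwise intersection. Your proof certifies only that the intersection is nonempty, without locating a witness point on $\ell(p_1,p_2)$; that is all the lemma asks for, so this is not a defect here.
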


\begin{proof}
Assume w.l.o.g.\ that $y(p_1)=y(p_2)$ and $x(p_1) < x(p_2)$.
Refer to Figure~\ref{fig:2-matching}(b).
Let $\tilde{q_1}$ and $\tilde{q_2}$ be the orthogonal projections of $q_1$ and $q_2$ on $\ell(p_1,p_2)$, respectively. By Thales' theorem, $\tilde{q_1}$ lies on $\mathcal{C}_{p_1q_1}$, which implies $p_1\tilde{q_1}=D_{p_1q_1}\cap\ell(p_1,p_2)$. Similarly,  $\tilde{q_2}$ lies on $\mathcal{C}_{p_2q_2}$, and $p_2\tilde{q_2}= D_{p_2q_2}\cap\ell(p_1,p_2)$. By Lemma~\ref{lem:2-matching}, $x(q_2) \leq x(q_1)$, which implies that segments $p_1\tilde{q_1}$ and $p_2\tilde{q_2}$ have a point in common. Hence, $D_{p_1q_1}\cap D_{p_2q_2}\neq \emptyset$.
\end{proof}

It remains to extend the previous result to $n$ points.
To that end, observe that in any maximum matching $\M$ of $R\cup B$, where $|R|=|B|\ge 2$, $\{(p_1,q_1),(p_2,q_2)\}$ is a maximum matching of $\{p_1,p_2,q_1,q_2\}$ for every pair $(p_1,q_1),(p_2,q_2)\in\M$.
That is, any two pairs of a maximum matching form also a maximum matching for the four points involved. Therefore, applying Lemma~\ref{lem:pairwise-intersection} we can conclude that in any maximum matching  $\M$ the disks $D_\M$ are pairwise intersecting.

\section{All disks in $D_\M$ intersect}
\label{sec:proofs_2}

The main goal of this section is to generalize the result in Lemma~\ref{lem:pairwise-intersection} from four to six points.
That is, we will consider sets of three disks from a maximum matching, and will show in Lemma~\ref{lem:3matching} that we can always shrink the disks until finding a point in common.
Then Helly's theorem will imply our main result.
However, this will require considerably more effort and the help of several geometric observations.
\begin{figure}[t]
	\centering
	\includegraphics[scale=1,page=1]{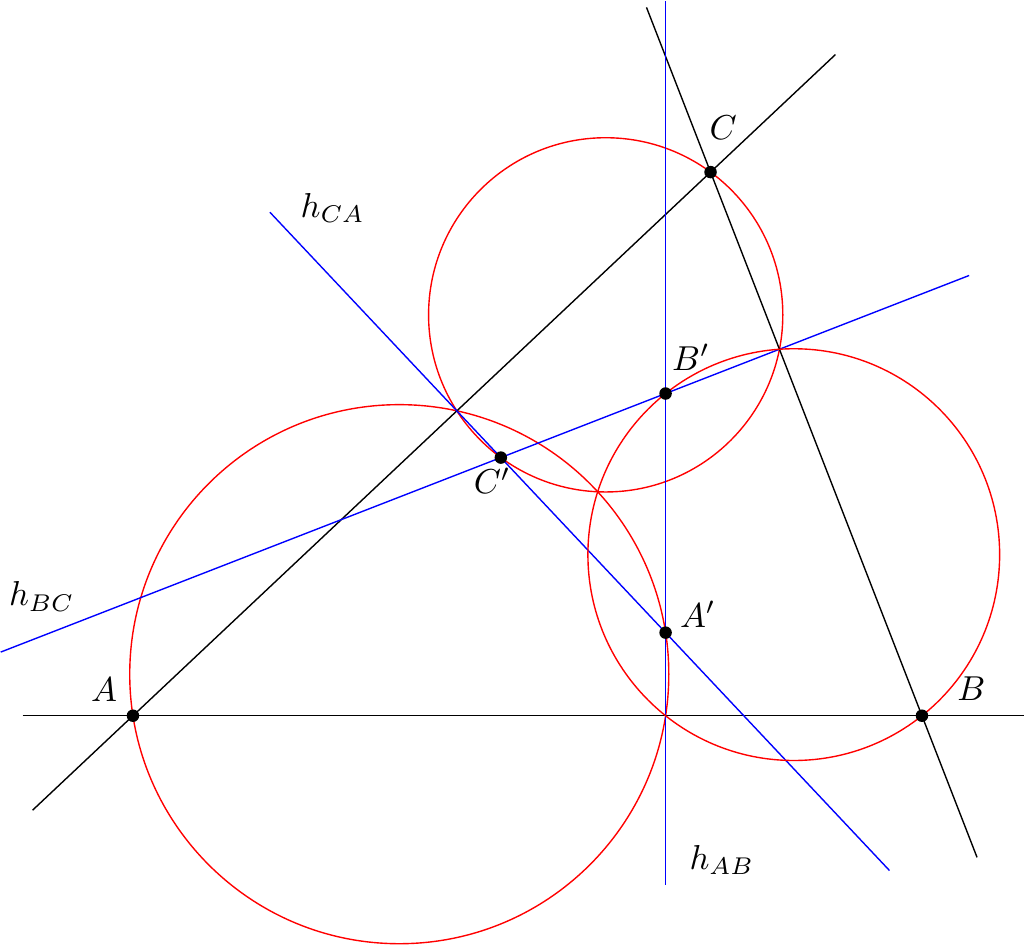}
	\caption{\small{
		Illustration of Lemma~\ref{lem:3circles}. The lemma states that for any three points in general position $A,B,C$, one can define three points $A',B',C'$ based on certain perpendicular lines, such that the associated diametral disks (shown red) intersect at one point.
	}}
	\label{fig:3circles1}
\end{figure}
The next three lemmas describe three different geometric situations at which we will arrive in the proof of   Lemma~\ref{lem:3matching}.
The first lemma is illustrated in  Figure~\ref{fig:3circles1}.

\begin{lemma}\label{lem:3circles}
Let $A$, $B$, and $C$ be three points in the plane in general position. Let $h_{AB}$, $h_{BC}$, and $h_{CA}$ be three lines that are perpendicular to $\ell(A,B)$, $\ell(B,C)$, and $\ell(C,A)$, respectively. Let the points $A'=h_{AB}\cap h_{CA}$, $B'=h_{AB}\cap h_{BC}$, and $C'=h_{BC}\cap h_{CA}$. Then, the three circles $\mathcal{C}_{AA'}$, $\mathcal{C}_{BB'}$, and $\mathcal{C}_{CC'}$ intersect at one point.
\end{lemma}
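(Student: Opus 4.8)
The plan is to exploit the defining perpendicularity conditions to locate a common intersection point explicitly, rather than arguing abstractly. The key observation is that each circle $\mathcal{C}_{XX'}$ is a diametral circle, so by Thales' theorem it is exactly the locus of points $P$ at which the segment $XX'$ subtends a right angle, i.e. $\angle XPX' = 90^\circ$. I would therefore try to guess the common point and verify it has the right-angle property with respect to all three chords $AA'$, $BB'$, and $CC'$ simultaneously.

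First I would set up the configuration carefully. The three lines $h_{AB}, h_{BC}, h_{CA}$ are perpendicular to the three sides of $\Delta ABC$ and, by definition, $A' = h_{AB}\cap h_{CA}$, $B'=h_{AB}\cap h_{BC}$, $C'=h_{BC}\cap h_{CA}$. Thus $A'B'$ lies along $h_{AB}$ (perpendicular to $\ell(A,B)$), $B'C'$ lies along $h_{BC}$ (perpendicular to $\ell(B,C)$), and $C'A'$ lies along $h_{CA}$ (perpendicular to $\ell(C,A)$). So $\Delta A'B'C'$ is a triangle whose three sides are respectively perpendicular to the three sides of $\Delta ABC$; equivalently $\Delta A'B'C'$ is obtained from $\Delta ABC$ by a $90^\circ$ rotation of directions, so the two triangles are similar (with a $90^\circ$ turn). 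This correspondence $A\leftrightarrow A'$, etc., is exactly the pairing used to form the three diametral circles.

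The natural candidate for the common point is the intersection $P$ of the circumcircles, which I expect to be the point where the ``right angle'' conditions all hold. Concretely, I would argue as follows. Consider the point $P = \mathcal{C}_{AA'}\cap \mathcal{C}_{BB'}$ (other than any trivial coincidence); by Thales, $\angle APA' = \angle BPB' = 90^\circ$. I want to deduce $\angle CPC' = 90^\circ$, which by the converse of Thales places $P$ on $\mathcal{C}_{CC'}$ as well. To establish this, I would track the directed angles at $P$: the right-angle conditions say that line $PA'$ is perpendicular to line $PA$, and $PB'$ is perpendicular to $PB$. Combining these with the fact that the sides of $\Delta A'B'C'$ are perpendicular to the corresponding sides of $\Delta ABC$ should force $PC'\perp PC$ by an angle-chase. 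An efficient way to make this rigorous is to use directed angles modulo $180^\circ$: perpendicularity means a fixed $90^\circ$ offset, and one shows that the rotation-by-$90^\circ$ relationship between the two triangles propagates consistently to the third pair, so the offset at $C$ matches.

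The main obstacle I anticipate is handling the angle bookkeeping cleanly, since perpendicularity and the similarity-with-rotation between $\Delta ABC$ and $\Delta A'B'C'$ must be combined without sign or degenerate-case errors; general position of $A,B,C$ is what guarantees the lines and intersection points are well defined and the circles are genuine. As a robust alternative to the synthetic angle-chase, I would fall back on a coordinate or complex-number computation: place $A,B,C$ as complex numbers, encode each perpendicular line, solve for $A',B',C'$, and then verify that the three diametral circles share a root of the corresponding quadratic equations, i.e.\ exhibit the common point's coordinates explicitly and check membership in all three circles. Either route reduces the statement to the single implication ``two right angles at $P$ force the third,'' which is where the real content lies.
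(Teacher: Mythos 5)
Your fallback route is exactly the paper's proof: it places $A,B,C$ in coordinates, writes $A',B',C'$ as the image of $A,B,C$ under a rotation by $\pi/2$ composed with a scaling and a translation, expresses each diametral circle by the vanishing of a scalar product (the analytic form of your Thales observation), solves for the second intersection of $\mathcal{C}_{AA'}$ and $\mathcal{C}_{BB'}$, and verifies that it lies on $\mathcal{C}_{CC'}$. Your preferred synthetic route is genuinely different and, once completed, cleaner; but as written it stops exactly at the decisive step, since ``the rotation-by-$90^\circ$ relationship propagates to the third pair by an angle-chase'' is asserted rather than carried out. The cleanest way to close it is to make the similarity quantitative: the map sending $\Delta ABC$ to $\Delta A'B'C'$ is a spiral similarity $\sigma$ with rotation angle $\pm 90^\circ$ and some ratio $\lambda>0$; its linear part is not the identity, so $\sigma$ has a unique fixed point $O$, and for every point $X$ with image $X'=\sigma(X)$ one has $\angle XOX'=90^\circ$. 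Hence $O$ lies on all three diametral circles at once, with no need to intersect two of them first. If you insist on starting from $P=\mathcal{C}_{AA'}\cap\mathcal{C}_{BB'}$ as you propose, note first that these two circles do meet: both pass through $Q=h_{AB}\cap\ell(A,B)$, since $Q$ sees $AA'$ and $BB'$ at right angles; thus they are the circumcircles of $\Delta QAA'$ and $\Delta QBB'$, and the classical spiral-similarity lemma identifies their second intersection with the center $O$ of the spiral similarity taking $A\mapsto A'$ and $B\mapsto B'$. Since $\Delta A'B'C'\sim\Delta ABC$ under the same vertex correspondence, that similarity also sends $C\mapsto C'$, giving $\angle COC'=90^\circ$ and hence $O\in\mathcal{C}_{CC'}$. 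What the synthetic argument buys is a conceptual identification of the common point (the center of the spiral similarity) and a sign-free treatment via directed angles; what the paper's computation buys is self-containedness and explicit coordinates for the common point, which it does not otherwise need.
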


\begin{proof}
Without loss of generality assume that $A=(a,0)$, $B=(b,0)$, and $C=(0,c)$, for some $a<0$, and $b,c>0$. Observe that triangles $\Delta ABC$ and $\Delta A'B'C'$ are similar, so that $\Delta A'B'C'$ is obtained from $\Delta ABC$ by a rotation of $\pi/2$ radians, a scaling of factor $\lambda$, for some $\lambda > 0$, and finally a translation by some vector $(\alpha,\beta) \in\mathbb{R}^2$.
Assume that the rotation is counter-clockwise (the clockwise case is analogous).

Then we have $A'=\lambda \cdot(0,a)+(\alpha,\beta)=(\alpha,\lambda a+\beta)$, $B'=\lambda \cdot(0,b)+(\alpha,\beta)=(\alpha,\lambda b+\beta)$, and $C'=\lambda \cdot(-c,0)+(\alpha,\beta)=(-\lambda c+\alpha,\beta)$. The points $(x,y)$ of $\mathcal{C}_{AA'}$ are those such that the scalar product between vectors $(x,y)-A=(x-a,y)$ and $(x,y)-A'=(x-\alpha,y-\lambda a -\beta)$ equals zero. That is,
\begin{equation}\label{eq1}
    (x-a)(x-\alpha) + y(y-\lambda a -\beta) ~=~ 0.
\end{equation}
Similarly, the points $(x,y)$ of $\mathcal{C}_{BB'}$ satisfy that the scalar product between $(x,y)-B=(x-b,y)$ and $(x,y)-B'=(x-\alpha,y-\lambda b -\beta)$ equals zero. That is,
\begin{equation}\label{eq2}
    (x-b)(x-\alpha) + y(y-\lambda b -\beta) ~=~ 0.
\end{equation}
One solution to the system formed by equations~\eqref{eq1} and~\eqref{eq2} is the point $(\alpha,0)=h_{AB}\cap \ell(A,B)$, which is one of the intersection points between $C_{AA'}$ and $C_{BB'}$. The other intersection point (considering multiplicity) can be found as follows. Subtracting~\eqref{eq2} from~\eqref{eq1}:
\begin{eqnarray}
    \nonumber
    (b-a)(x-\alpha) + y(\lambda(b-a)) & = & 0 \\
            \label{eq3}
            x & = & -\lambda y + \alpha.
\end{eqnarray}
Substituting equation~\eqref{eq3} in equation~\eqref{eq1}, we obtain
\begin{eqnarray*}
    (-\lambda y + \alpha - a)(-\lambda y) + y(y-\lambda a -\beta) & = & 0 \\
%    y(\lambda^2 y - \lambda\alpha + \lambda a + y -\lambda a - \beta) & = & 0 \\
    y(\lambda^2 y + y - \lambda\alpha - \beta) & = & 0 \\
    y & = & \frac{\lambda\alpha + \beta}{1+\lambda^2}.
\end{eqnarray*}
Then,
\begin{align}
    x &=~ -\lambda\left( \frac{\lambda\alpha + \beta}{1+\lambda^2}\right) + \alpha
      =~ \frac{-\lambda^2\alpha - \lambda\beta + \alpha + \lambda^2\alpha}{1+\lambda^2}
      ~=~ \frac{- \lambda\beta + \alpha}{1+\lambda^2}.
\end{align}

The points $(x,y)$ of $\mathcal{C}_{CC'}$ satisfy that the scalar product between vectors $(x,y)-C=(x,y-c)$ and $(x,y)-C'=(x+\lambda c -\alpha,y -\beta)$ equals zero. That is,
\begin{equation}\label{eq4}
    x(x+\lambda c -\alpha) + (y-c)(y-\beta) ~=~ 0.
\end{equation}
To show the lemma it suffices to prove that $(x,y)=(\frac{- \lambda\beta + \alpha}{1+\lambda^2},\frac{\lambda\alpha + \beta}{1+\lambda^2})$ satisfies equation~\eqref{eq4}.

\begin{align*}
 x(x+\lambda c -\alpha)     & = \left(\frac{- \lambda\beta + \alpha}{1+\lambda^2}\right)
                                  \left(\frac{- \lambda\beta + \alpha}{1+\lambda^2}+\lambda c -\alpha\right) \\
%         & = \left(\frac{- \lambda\beta + \alpha}{1+\lambda^2}\right)
%               \left(\frac{- \lambda\beta + \alpha + \lambda c + \lambda^3c -\alpha - \lambda^2\alpha}{1+\lambda^2}\right) \\
        & = -\lambda\left(\frac{- \lambda\beta + \alpha}{1+\lambda^2}\right)
                \left(\frac{\lambda\alpha + \beta - c - \lambda^2c}{1+\lambda^2}\right) \\
    (y-c)(y-\beta) & =  \left(\frac{\lambda\alpha + \beta}{1+\lambda^2}-c\right)
                         \left(\frac{\lambda\alpha + \beta}{1+\lambda^2}-\beta\right) \\
%          & =  \left(\frac{\lambda\alpha + \beta - c - \lambda^2c}{1+\lambda^2}\right)
%                \left(\frac{\lambda\alpha + \beta - \beta - \lambda^2\beta}{1+\lambda^2}\right) \\
          & = \lambda\left(\frac{\lambda\alpha + \beta - c - \lambda^2c}{1+\lambda^2}\right)
               \left(\frac{- \lambda\beta+\alpha}{1+\lambda^2}\right) \\
         & = -x(x+\lambda c -\alpha).
\end{align*}
Therefore, $(x,y)=(\frac{- \lambda\beta + \alpha}{1+\lambda^2},\frac{\lambda\alpha + \beta}{1+\lambda^2})$ satisfies equation~\eqref{eq4} and is common to $\mathcal{C}_{AA'}$, $\mathcal{C}_{BB'}$, and $\mathcal{C}_{CC'}$.
\end{proof}

We continue with the following lemma that describes a situation on four points. 
Refer to Figure~\ref{fig:aux1}.

\begin{lemma}\label{lem:aux1}
Let $A$, $B$, $P$, and $R$ be four points in the plane such that $\ell(A,B)$ is horizontal, $B$ is to the right of $A$, $P$ belongs to $\ell(A,B)$, and $R$ is above $\ell(A,B)$. Let $\mathcal{C}_1$ be the circle through the points $A$, $P$, and $R$, and $\mathcal{C}_2$ be a circle through $B$ and $P$. If $\mathcal{C}_1$ and $\mathcal{C}_2$ are tangent, let $O=P$, otherwise let $O$ be the  intersection point different from $P$ between $\mathcal{C}_1$ and $\mathcal{C}_2$. Then, if $\mathcal{C}_2$ does not enclose $R$, the points $O$ and $B$ are in the same side of $\ell(A, R)$.
\end{lemma}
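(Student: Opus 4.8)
The plan is to locate $O$ by deciding on which arc of $\mathcal{C}_1$ it lies, and then to read off its side of $\ell(A,R)$ directly. Since $A,R,O\in\mathcal{C}_1$ and $\ell(A,R)$ is a chord of $\mathcal{C}_1$, the line $\ell(A,R)$ meets $\mathcal{C}_1$ only at $A$ and $R$, so each of the two arcs it cuts off lies entirely in one open half-plane of $\ell(A,R)$; hence $O$ lies in a prescribed half-plane exactly when it lies on the corresponding arc. First I would fix coordinates with $\ell(A,B)$ horizontal and $B$ to the right of $A$. As $R$ is strictly above this line, $\ell(A,R)$ crosses the horizontal line only at $A$, so every point of $\ell(A,B)$ strictly to the right of $A$ (in particular $B$) lies in one open half-plane $H^+$, and every point strictly to the left lies in the other, $H^-$. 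The goal then reduces to showing $O\in H^+$.

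The two tools I would use are the power of a point and the inside/outside arc dichotomy. Because $P$ and $B$ lie on $\mathcal{C}_2$ and on the horizontal line, that line is a secant of $\mathcal{C}_2$, so the power of $A$ with respect to $\mathcal{C}_2$ equals the signed product $\overline{AP}\cdot\overline{AB}$; since $\overline{AB}>0$, its sign is that of $\overline{AP}$. Thus $A$ lies outside $\mathcal{C}_2$ exactly when $P$ is to the right of $A$ (equivalently $P\in H^+$), and inside $\mathcal{C}_2$ exactly when $P\in H^-$. When $\mathcal{C}_1$ and $\mathcal{C}_2$ meet transversally at $P$ and $O$, the chord $PO$ splits $\mathcal{C}_1$ into two arcs, one lying inside $\mathcal{C}_2$ and the other outside, and a point of $\mathcal{C}_1\setminus\{P,O\}$ is enclosed by $\mathcal{C}_2$ iff it lies on the inner arc. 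The hypothesis that $\mathcal{C}_2$ does not enclose $R$ therefore places $R$ on the outer arc.

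With these facts I would run a short case analysis and observe that the two effects cancel. If $P\in H^+$, then $A$ is outside $\mathcal{C}_2$, so $A$ and $R$ lie on the same (outer) arc determined by $PO$; the chords $AR$ and $PO$ then do not cross inside $\mathcal{C}_1$, which forces $P$ and $O$ to lie on the same side of $\ell(A,R)$, and since $P\in H^+$ we get $O\in H^+$. If instead $P\in H^-$, then $A$ is inside $\mathcal{C}_2$ while $R$ is outside, so $A$ and $R$ lie on opposite arcs, the chords $AR$ and $PO$ cross, $P$ and $O$ fall on opposite sides of $\ell(A,R)$, and since now $P\in H^-$ we again obtain $O\in H^+$. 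Either way $O$ shares the half-plane $H^+$ with $B$.

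The main obstacle is exactly this compensation: neither the side of $P$ nor the crossing behaviour of the two chords is pinned down by the hypotheses, yet they flip in tandem so that the conclusion is uniform, and making this cancellation airtight is the crux. The remaining care lies in the degenerate cases. In the tangent case $O=P$, I would argue that tangency forces all of $\mathcal{C}_1\setminus\{P\}$ to lie on a single side of $\mathcal{C}_2$; since $R$ is not enclosed, that side is the outside, so $A$ is also outside $\mathcal{C}_2$, giving $P\in H^+$ and hence $O=P\in H^+$. The boundary situation $O=R$ (that is, $R\in\mathcal{C}_2$) puts $O$ on $\ell(A,R)$ itself, consistent with reading the statement for the closed half-plane. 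The two standard facts I would invoke to finish rigorously are that the power with respect to $\mathcal{C}_2$, restricted to $\mathcal{C}_1$, changes sign across each of $P$ and $O$ (so the arcs are genuinely one inside and one outside), and that two chords of a circle cross internally iff their endpoints alternate around the circle.
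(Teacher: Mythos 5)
Your proof is correct, and it takes a genuinely different route from the paper. The paper performs a circle inversion centered at $A$: this sends $\mathcal{C}_1$ to the line $\ell(P',R')$, and the position of $O'$ on that line (on segment $R'P'$, or beyond $R'$) is read off from whether $\mathcal{C}'_2$ encloses $R'$; the two cases $P$ right/left of $A$ correspond to whether $\mathcal{C}_2$ avoids or encloses the inversion center, which is exactly what flips the enclosure relation under inversion. You stay in the original picture and replace the inversion by two classical facts: the power of $A$ with respect to $\mathcal{C}_2$, computed along the secant through $P$ and $B$, which decides whether $A$ is inside or outside $\mathcal{C}_2$ according to the sign of $\overline{AP}$; and the inside/outside arc dichotomy on $\mathcal{C}_1$ combined with the chord-crossing criterion (chords $AR$ and $PO$ cross iff their endpoints alternate iff $P$ and $O$ are separated by $\ell(A,R)$). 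The two case analyses are structurally parallel and exhibit the same compensation --- the side of $P$ and the crossing behaviour flip together --- but your argument is more elementary and self-contained, needing no inversion machinery, while the paper's is shorter for a reader fluent in inversion. You also handle the same degenerate situations the paper implicitly allows (the tangent case $O=P$, and the boundary case $R\in\mathcal{C}_2$, i.e., $O=R$ on $\ell(A,R)$, read with closed half-planes), so nothing is missing.
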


\begin{proof}
Consider the case where $P$ is to the right of $A$ (see Figure~\ref{fig:aux1-1}). Make a circle inversion at $A$ (with any radius), and let $B'$, $P'$, $R'$, $O'$, $\mathcal{C}'_1$, and $\mathcal{C}'_2$ denote the images of $B$, $P$, $R$, $O$, $\mathcal{C}_1$, and $\mathcal{C}_2$, respectively (see Figure~\ref{fig:aux1-2}). Note that $\mathcal{C}'_2$ is the circle through $P'$, $B'$, and $O'$, and $\mathcal{C}'_1$ is the line $\ell(P',R')$ because $\mathcal{C}_1$ goes through the center of the inversion $A$ . Observe that $P'$ and $B'$ are in the same half-plane bounded by $\ell(A,R')=\ell(A,R)$.

Since $\mathcal{C}_2$ does not enclose $A$, the center of the inversion, $\mathcal{C}_2$ does not enclose $R$ if and only if $\mathcal{C}'_2$ does not enclose $R'$. Then, $\mathcal{C}'_2$ does not enclose $R'$, which implies that $O'$ lies on the line segment $R'P'$. This ensures that $O'$ and $B'$, also $O$ and $B$, are in the same side of $\ell(A, R)$.

Consider now the case where $P$ is to the left of $A$ (see Figure~\ref{fig:aux1-3}). Make again a circle inversion at $A$, in which $\mathcal{C}'_1$ is the line $\ell(P',R')$ (see Figure~\ref{fig:aux1-4}). Since $\mathcal{C}_2$ encloses the center $A$ of the inversion, $\mathcal{C}_2$ does not enclose $R$ if and only if $\mathcal{C}'_2$ encloses $R'$. Then, $\mathcal{C}'_2$ encloses $R'$, which implies that $R'$ belongs to the segment $P'O'$, and also that $P'$ and $O'$ are separated by $\ell(A,R')=\ell(A,R)$. This guarantees that $O'$ and $B'$, also $O$ and $B$, are in the same half-plane bounded by $\ell(A,R)$.
\end{proof}

\begin{figure}[t]
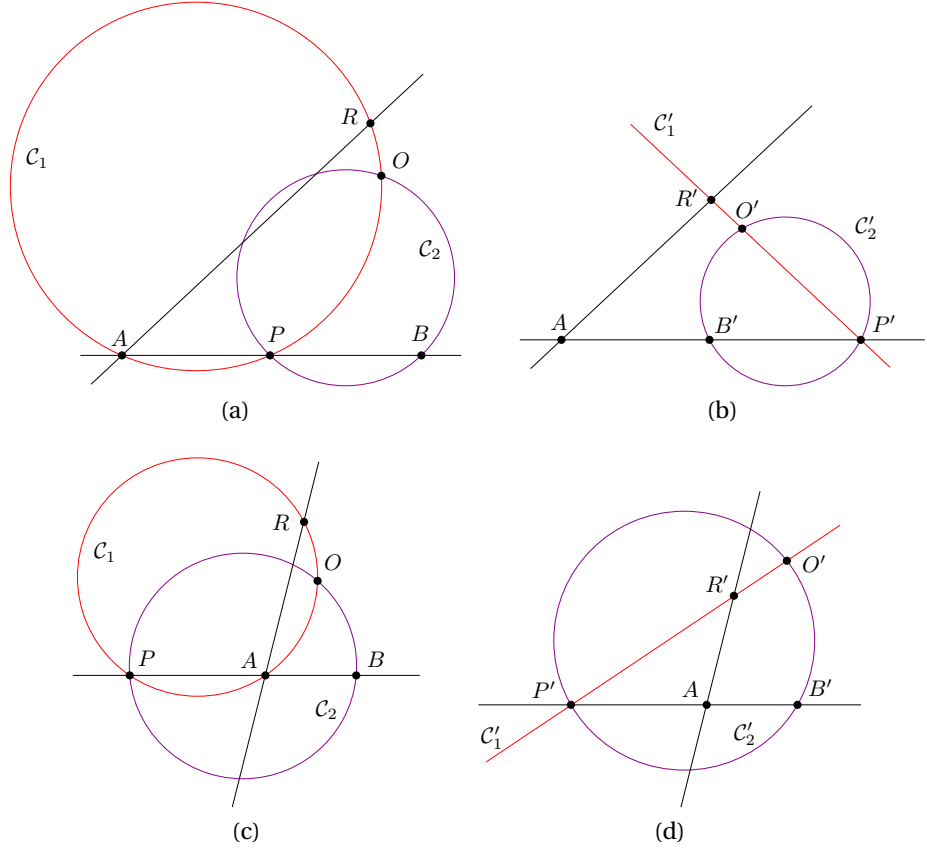

	\centering
	\subfloat[]{
		\includegraphics[scale=0.85,page=4]{img.pdf}
		\label{fig:aux1-1}
	}\hspace{0.5cm}
	\subfloat[]{
		\includegraphics[scale=0.85,page=5]{img.pdf}
		\label{fig:aux1-2}
	}\\
	\subfloat[]{
		\includegraphics[scale=0.85,page=6]{img.pdf}
		\label{fig:aux1-3}
	}\hspace{0.5cm}
	\subfloat[]{
		\includegraphics[scale=0.85,page=7]{img.pdf}
		\label{fig:aux1-4}
	}
	\caption{\small{
		Illustration of Lemma~\ref{lem:aux1}.
	}}
	\label{fig:aux1}
\end{figure}

Finally, we need one more technical lemma, illustrated in Figure~\ref{fig:aux2}.

%\begin{lemma}\label{lem:aux2}
%Let $\ell$ be a line, and $R,C\in \ell$ two points. Let $h$ be a half-line with apex point $H$ such that the line containing $h$ intersects $\ell$ at $R$ and is perpendicular to $\ell$. Let $\delta$ be the half-plane bounded by $\ell$ such that $\delta\cap h$ is a half-line. Then, for any two points $X,Y\in h$ with $|XH|\le |YH|$, we have $D_{XC}\cap\delta \subseteq D_{YC}\cap \delta$.
%\end{lemma}

\begin{lemma}\label{lem:aux2}
Let $\ell$ be a line, and $R,C\in \ell$ two points. Let $h$ be a half-line with apex point $H$ such that the supporting line of $h$ is perpendicular to $\ell$ 
at point $R$. Let $\delta$ be the half-plane bounded by $\ell$ such that $\delta\cap h$ is a half-line. Then, for any two points $X,Y\in h$ with $|XH|\le |YH|$, we have $D_{XC}\cap\delta \subseteq D_{YC}\cap \delta$.
\end{lemma}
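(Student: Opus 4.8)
The plan is to reduce the whole statement to a single monotone inequality by choosing coordinates adapted to the perpendicularity hypothesis. First I would place $\ell$ along the $x$-axis with $R$ at the origin, so that the supporting line of $h$ becomes the $y$-axis. Up to a reflection I may assume that $\delta$ is the upper half-plane $\{y\ge 0\}$; the hypothesis that $\delta\cap h$ is a half-line then forces $h$ to be the portion of the $y$-axis above some apex $H=(0,h_0)$ extending to $+\infty$ (if $h$ pointed downward, $\delta\cap h$ would be a segment or empty, not a half-line). Writing $C=(c,0)$, $X=(0,s)$ and $Y=(0,t)$ with $s,t\ge h_0$, the assumption $|XH|\le|YH|$ becomes simply $s\le t$, since both $s-h_0$ and $t-h_0$ are nonnegative.

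Next I would rewrite membership in the diametral disks using the Thales characterization: a point $Z$ lies in $D_{XC}$ exactly when the angle $\angle XZC$ is at least a right angle, equivalently when $(Z-X)\cdot(Z-C)\le 0$. For $Z=(u,v)$ this expands to $u^2+v^2-cu\le sv$, and likewise $Z\in D_{YC}$ if and only if $u^2+v^2-cu\le tv$. The crucial observation is that the two conditions differ only in the linear term on the right, whose coefficient is precisely the $y$-coordinate of $X$, respectively of $Y$.

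Finally I would restrict to $\delta$. For any point $Z=(u,v)$ with $v\ge 0$, the inequality $s\le t$ gives $sv\le tv$, so $u^2+v^2-cu\le sv$ immediately implies $u^2+v^2-cu\le tv$; that is, $Z\in D_{XC}$ forces $Z\in D_{YC}$. This yields $D_{XC}\cap\delta\subseteq D_{YC}\cap\delta$, as required. I expect no genuine difficulty in the core argument; the only care needed is the preliminary bookkeeping that fixes the direction of $h$ and turns the distance hypothesis $|XH|\le|YH|$ into the clean ordering $s\le t$, after which the constraint $v\ge 0$ does all the work, since the sign of $v$ is exactly what controls the comparison between the two disks.
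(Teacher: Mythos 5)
Your proof is correct, and it takes a genuinely different route from the paper. The paper argues synthetically: it characterizes membership in the diametral disks via inscribed angles ($Z\in D_{XC}$ iff $\angle XZC\ge\pi/2$), and then runs a three-part case analysis depending on whether $X$ and $Y$ lie on the segment $HR$, beyond $R$, or one on each side, together with a separate remark for the configuration where $h$ does not meet $\ell$ at all. You instead encode the same right-angle criterion algebraically as $(Z-X)\cdot(Z-C)\le 0$, which in your coordinates becomes $u^2+v^2-cu\le sv$, and observe that the only dependence on the point of $h$ is through the coefficient of $v$; since $v\ge 0$ on $\delta$ and $s\le t$, the inclusion is a one-line monotonicity statement. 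This collapses all of the paper's sub-cases (and the position of $H$ relative to $\ell$) into a single computation, at the cost of the small amount of bookkeeping you correctly carry out: fixing the orientation of $h$ from the hypothesis that $\delta\cap h$ is a half-line, and converting $|XH|\le|YH|$ into $s\le t$ using $s,t\ge h_0$. Both arguments are sound; yours is the more uniform and, in my view, the cleaner of the two.
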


\begin{proof}
Consider the more general case in which $h$ and $\ell$ intersect at $R$. The other case where $h$ and $\ell$ do not intersect can be proved similarly.
We analyze three cases, depending on where the points $X$ and $Y$ lie on $h$.

i) Let $X,Y\in HR$ be two points satisfying $|XH|\le |YH|$ (see Figure~\ref{fig:aux2-1}). Then, we have $\angle RXC \le \angle RYC$. For any two points $X'\in\mathcal{C}_{XC}$ and $Y' \in\mathcal{C}_{YC}$ in the interior of $\delta$, we have $\angle RX'C=\pi-\angle RXC$ and $\angle RY'C=\pi-\angle RYC$. This implies $\angle RY'C\le \angle RX'C$, and hence $D_{XC}\cap\delta \subseteq D_{YC}\cap \delta$.

ii) Let $X,Y\in (h\setminus HR)\cup \{R\}$ be two points satisfying $|XH|\le |YH|$ (see Figure~\ref{fig:aux2-2}). For any two points $X'\in\mathcal{C}_{XC}$ and $Y'\in\mathcal{C}_{YC}$ in the interior of $\delta$, we have $\angle RY'C =\angle RYC \le \angle RXC=\angle RX'C$. This implies $D_{XC}\cap\delta \subseteq D_{YC}\cap \delta$.

iii) Finally, if $X\in HR$ and $Y\in (h\setminus HR)\cup \{R\}$, from the first case we have $D_{XC}\cap\delta \subseteq D_{RC}\cap \delta$, and from the second one $D_{RC}\cap\delta \subseteq D_{YC}\cap \delta$. Hence, $D_{XC}\cap\delta \subseteq D_{YC}\cap \delta$, and the lemma is proved.
\end{proof}

\begin{figure}[t]
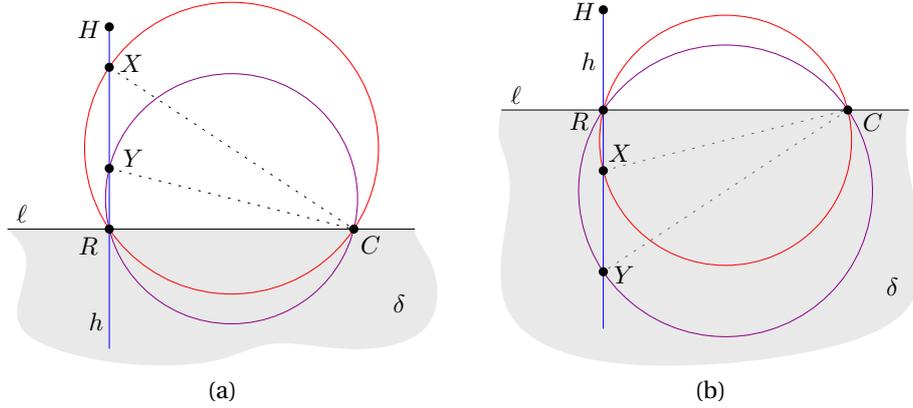

	\centering
	\subfloat[]{
		\includegraphics[scale=0.95,page=8]{img.pdf}
		\label{fig:aux2-1}
	}\hspace{0.5cm}
	\subfloat[]{
		\includegraphics[scale=0.95,page=9]{img.pdf}
		\label{fig:aux2-2}
	}
	\caption{\small{
		Illustration of Lemma~\ref{lem:aux2}.
	}}
	\label{fig:aux2}
\end{figure}

We have now all the tools to prove the main lemma in this work.

\begin{lemma}\label{lem:3matching}
Let $p_1, p_2,p_3\in R$ and $q_1,q_2,q_3\in B$ such that $\{(p_1,q_1),(p_2,q_2),(p_3,q_3)\}$ is a maximum matching for $\{p_1,q_1,p_2,q_2,p_3,q_3\}$. Then, the disks $D_{p_1q_1}$, $D_{p_2q_2}$, and $D_{p_3q_3}$ have a point in common.
\end{lemma}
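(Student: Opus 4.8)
The plan is to shrink the three diametral disks to three smaller disks whose boundary circles all pass through a single point, and to certify the existence of that point by Lemma~\ref{lem:3circles}. Concretely, I would replace each blue point $q_i$ by a point $q_i'$ chosen so that the triangle $q_1'q_2'q_3'$ has its sides $\ell(q_1',q_2')$, $\ell(q_2',q_3')$, $\ell(q_3',q_1')$ perpendicular to $\ell(p_1,p_2)$, $\ell(p_2,p_3)$, $\ell(p_3,p_1)$, respectively. This is precisely the hypothesis of Lemma~\ref{lem:3circles} with $A,B,C=p_1,p_2,p_3$ and $A',B',C'=q_1',q_2',q_3'$ (the triangle $q_1'q_2'q_3'$ being a copy of $p_1p_2p_3$ rotated by $\pi/2$, as in that lemma), so the three circles $\mathcal{C}_{p_iq_i'}$ meet at one point $Z$, which is in fact the centre of that spiral similarity. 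Since $Z\in\mathcal{C}_{p_iq_i'}\subseteq D_{p_iq_i'}$, the proof reduces to producing such points $q_i'$ with the additional property that each inclusion $D_{p_iq_i'}\subseteq D_{p_iq_i}$ holds in a half-plane containing $Z$; then $Z\in D_{p_1q_1}\cap D_{p_2q_2}\cap D_{p_3q_3}$.

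Choosing the three perpendicular lines leaves exactly three degrees of freedom (their offsets), which must be spent to force the three inclusions simultaneously, and it is here that maximality of the matching enters. Applying Lemma~\ref{lem:2-matching} to each of the three pairs of matched pairs fixes, in suitable coordinates, the left–right order of the blue points relative to the red ones, and hence tells us on which side of $p_i$ the point $q_i'$ must fall and in which direction $q_i$ has to be moved to reach it. To convert this displacement into a genuine shrinking I would apply Lemma~\ref{lem:aux2} with $C=p_i$ and $\ell$ the line through $p_i$ perpendicular to the segment $q_iq_i'$, so that $q_i$ and $q_i'$ lie on a common half-line $h$; the inequality $|q_i'H|\le|q_iH|$ then yields $D_{p_iq_i'}\cap\delta\subseteq D_{p_iq_i}\cap\delta$ on the half-plane $\delta$, which must be arranged to contain $Z$.

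The role of Lemma~\ref{lem:aux1} is to certify the side conditions that make the previous step legitimate: it locates the relevant second intersection point of two of the circles (ultimately the common point $Z$) on the correct side of a line through a red point, guaranteeing $Z\in\delta$ so that the monotonicity of Lemma~\ref{lem:aux2} genuinely applies at $Z$. I expect the main obstacle to be exactly this bookkeeping. The argument should split into cases according to the orientation (clockwise or counter-clockwise) of the spiral similarity and according to whether each perpendicular foot lands inside or outside its segment, and in every case one must verify, combining the ordering from Lemma~\ref{lem:2-matching} with Lemmas~\ref{lem:aux1} and~\ref{lem:aux2}, that all three disks shrink \emph{toward} $Z$ rather than away from it. Once each case yields $D_{p_iq_i'}\subseteq D_{p_iq_i}$ on the side of $Z$, Lemma~\ref{lem:3circles} supplies the common point $Z$ and completes the proof.
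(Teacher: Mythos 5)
Your overall strategy---shrink the disks, certify a common point via Lemma~\ref{lem:3circles}, and use Lemmas~\ref{lem:aux1} and~\ref{lem:aux2} to place that point inside the original disks---is the same as the paper's, but the step you defer to ``bookkeeping'' is exactly where the proof lives, and your plan for it has a genuine gap. You propose to choose $q_1',q_2',q_3'$ so that \emph{all three} sides of the triangle $q_1'q_2'q_3'$ are perpendicular to the corresponding sides of $\Delta p_1p_2p_3$, and then to spend the three line offsets to force the three inclusions $D_{p_iq_i'}\subseteq D_{p_iq_i}$ in half-planes containing $Z$. You give no argument that such a choice exists, and the constraints are genuinely in tension: the three offsets determine $Z$ rigidly, each application of Lemma~\ref{lem:aux2} requires its own half-plane $\delta_i$ that must contain $Z$, and nothing shows these three side conditions can be met simultaneously. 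The paper sidesteps this by shrinking each $q_i$ \emph{along the segment} $p_iq_i$ (so that $D_{p_i\tilde{p}_i}\subseteq D_{p_iq_i}$ is automatic for all three disks) subject to the pairwise ordering constraints of Lemma~\ref{lem:2-matching}, and taking a \emph{maximal} such shrinking. Maximality guarantees only \emph{two} of the three perpendicularities, not three; the third vertex of the perpendicular triangle is an auxiliary point $s^*$ that in general differs from the third shrunk point, and Lemmas~\ref{lem:aux1} and~\ref{lem:aux2} are then invoked for exactly one disk, to transfer the common point $O$ from $D_{p_3s^*}$ to $D_{p_3\tilde{p}_3}$. This asymmetry---two containments for free, one requiring the half-plane argument---is what makes the case analysis close, and it is absent from your plan.

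Two smaller omissions: you do not treat the case where $p_1,p_2,p_3$ are collinear (no nondegenerate perpendicular triangle exists there, but Thales' theorem gives the common point directly), nor the case where the maximal shrinking collapses one disk to a single point (the paper's Case~1), which also escapes the Lemma~\ref{lem:3circles} setup. Your identification of $Z$ as the centre of the spiral similarity taking $\Delta p_1p_2p_3$ to the perpendicular triangle is correct and is a clean way to see Lemma~\ref{lem:3circles}, but it does not substitute for the existence argument above.
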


\begin{proof}
The idea is to reduce the disks $D_{p_1q_1}$, $D_{p_2q_2}$, and $D_{p_3q_3}$ as much as possible so that each of the new three disks is contained in its corresponding original disk, and the new disks still have a point in common that is easier to find than for the original disks.
We begin by observing that a maximum matching for three pairs of points must also be maximum for any subset of two pairs, thus the implications of Lemma~\ref{lem:2-matching} must hold for any two pairs that we take.

We will shrink the three diametral disks as much as possible, while maintaining the conditions of Lemma~\ref{lem:2-matching}.
Formally, for every $\varepsilon_1\in[0,|p_1q_1|]$, $\varepsilon_2\in [0,|p_2q_2|]$, and $\varepsilon_3\in[0,|p_3q_3|]$, let $q_1(\varepsilon_1)\in p_1q_1$, $q_2(\varepsilon_2)\in p_2q_2$, and $q_3(\varepsilon_3)\in p_3q_3$ be the points such that $|q_1q_1(\varepsilon_1)|=\varepsilon_1$, $|q_2q_2(\varepsilon_2)|=\varepsilon_2$, and $|q_3q_3(\varepsilon_3)|=\varepsilon_3$. Let $(\tilde{\varepsilon}_1,\tilde{\varepsilon}_2,\tilde{\varepsilon}_3)$ be a maximal point of the set $[0,|p_1q_1|]\times [0,|p_2q_2|]\times [0,|p_3q_3|]$ such that the conditions of Lemma~\ref{lem:2-matching} are satisfied pairwise, that is, the following three statements hold:

\begin{itemize}
	\item[(1)] in the direction from $p_1$ to $p_2$, $q_2(\tilde{\varepsilon}_2)$ is not to the right of $q_1(\tilde{\varepsilon_1})$;
	\vspace{-6pt}
	\item[(2)] in the direction from $p_2$ to $p_3$, $q_3(\tilde{\varepsilon}_3)$ is not
	to the right of $q_2(\tilde{\varepsilon_2})$;
	\vspace{-6pt}	
	\item[(3)] in the direction from $p_3$ to $p_1$, $q_1(\tilde{\varepsilon}_1)$ is not to the right of $q_3(\tilde{\varepsilon}_3)$.
		\vspace{-4pt}
\end{itemize}
The point $(\tilde{\varepsilon}_1,\tilde{\varepsilon}_2,\tilde{\varepsilon}_3)$ is {\em maximal} if there does not exist any other point
$(\varepsilon'_1,\varepsilon'_2,\varepsilon'_3)\in [0,|p_1q_1|]\times [0,|p_2q_2|]\times [0,|p_3q_3|]$ such that $\tilde{\varepsilon}_1\le \varepsilon'_1$, $\tilde{\varepsilon}_2\le \varepsilon'_2$, $\tilde{\varepsilon}_3\le \varepsilon'_3$, and the above three conditions are also satisfied by using $(\varepsilon'_1,\varepsilon'_2,\varepsilon'_3)$ instead of $(\tilde{\varepsilon}_1,\tilde{\varepsilon}_2,\tilde{\varepsilon}_3)$.

\begin{figure*}[bt]
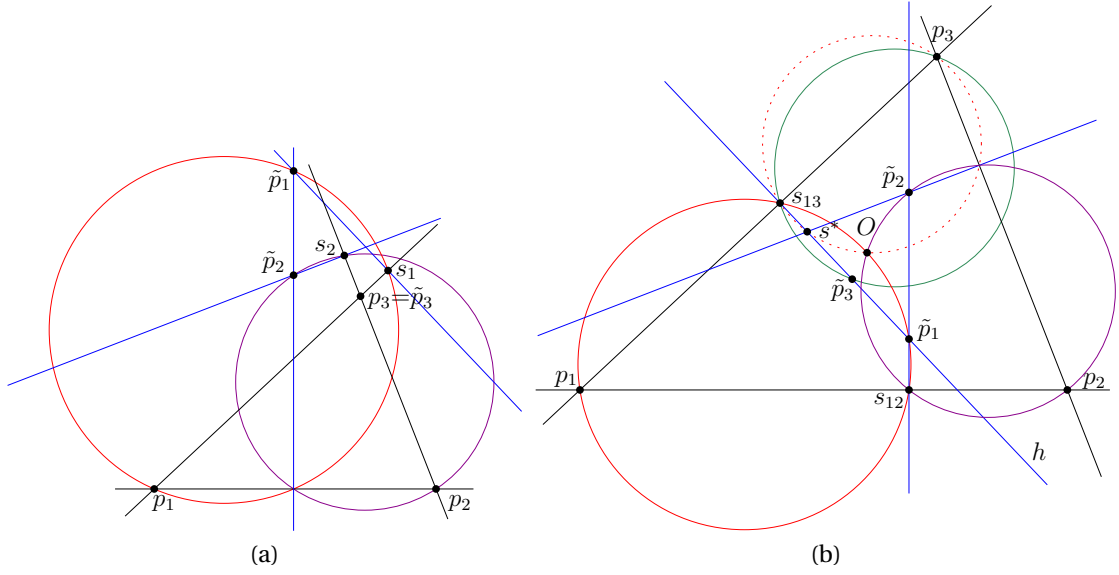

	\centering
	\subfloat[]{
		\includegraphics[scale=0.8,page=3]{img.pdf}
		\label{fig:3matching1}
	}
	\subfloat[]{
		\includegraphics[scale=0.8,page=10]{img.pdf}
		\label{fig:3matching2}
	}
	\caption{\small{
		Illustration of the two cases considered in the proof of Lemma~\ref{lem:3matching}:
		(a) $\tilde{p}_3=p_3$, (b) $\tilde{p}_3\neq p_3$.
	}}
	\label{fig:3matching}
\end{figure*}

Let $\tilde{p}_1=q_1(\tilde{\varepsilon}_1)$, $\tilde{p}_2=q_2(\tilde{\varepsilon}_2)$,
and $\tilde{p}_3=q_3(\tilde{\varepsilon}_3)$. Note that $D_{p_1\tilde{p}_1}\subseteq D_{p_1q_1}$, $D_{p_2\tilde{p}_2}\subseteq D_{p_2q_2}$, and $D_{p_3\tilde{p}_3}\subseteq D_{p_3q_3}$. We prove now that $D_{p_1\tilde{p}_1}$, $D_{p_2\tilde{p}_2}$, and $D_{p_3\tilde{p}_3}$ have a point in common, which implies the lemma.

If $p_1$, $p_2$, and $p_3$ belong to the same line $\ell$ (assuming w.l.o.g.\ that they appear in this order in $\ell$), then the points $\tilde{p}_1$, $\tilde{p}_2$, and $\tilde{p}_3$ belong to the same line $\ell'$ perpendicular to $\ell$ . By Thales' theorem, the point $\ell\cap\ell'$ is common to $D_{p_1\tilde{p}_1}$, $D_{p_2\tilde{p}_2}$, and $D_{p_3\tilde{p}_3}$. Hence, assume from now on that $p_1$, $p_2$, and $p_3$ are in general position. Then, there are two cases to consider:

\medskip

\noindent{\bf Case 1}: $\tilde{p}_i=p_i$ for some $i \in \{1,2,3\}$. Assume w.l.o.g. $\tilde{p}_3=p_3$ (see Figure~\ref{fig:3matching1}).
Then $D_{p_3\tilde{p}_3}$ consists of a single point, $p_3$.
Let $s_1$ and $s_2$ be the orthogonal projections of $\tilde{p}_1$ on $\ell(p_1,p_3)$, and $\tilde{p}_2$ on $\ell(p_2,p_3)$, respectively. From the third condition above, we have that in the direction from $p_3$ to $p_1$, point $\tilde{p}_1$ is not to the right of $\tilde{p}_3$.
%\rodrigo{No es al reves?}
%\pablo{Lo veo bien así: Lo que se quiere decir es que si $p_3$ está a la izquierda de $p_1$ (ambos en el eje $x$), entonces en la dirección horizontal $\tilde(p)_1$ y $\tilde(p)_3$ están en la misma vertical, o $\tilde(p)_1$ está a la izquierda de $\tilde(p)_3$.}
Thus, we have $p_3\in p_1s_1$. Similarly, since in the direction from $p_2$ to $p_3$, point $\tilde{p}_3$ is not to the right of $\tilde{p}_2$, we have $p_3\in p_2s_2$. By Thales' theorem $p_1s_1\subset D_{p_1\tilde{p}_1}$ and $p_2s_2\subset D_{p_2\tilde{p}_2}$, hence $p_3=\tilde{p}_3$ is common to $D_{p_1\tilde{p}_1}$, $D_{p_2\tilde{p}_2}$, and $D_{p_3\tilde{p}_3}$.

\medskip

\noindent{\bf Case 2}: $\tilde{p}_1\neq p_1$, $\tilde{p}_2\neq p_2$, and $\tilde{p}_3\neq p_3$. By construction of $\tilde{p}_1$, $\tilde{p}_2$, and $\tilde{p}_3$, at least two pairs of lines among $(\ell(p_1,p_2),\ell(\tilde{p}_1,\tilde{p}_2))$, $(\ell(p_2,p_3),\ell(\tilde{p}_2,\tilde{p}_3))$, and $(\ell(p_3,p_1),\ell(\tilde{p}_3,\tilde{p}_1))$ form perpendicular lines.
Note that this last statement follows from the fact that $(\tilde{\varepsilon}_1,\tilde{\varepsilon}_2,\tilde{\varepsilon}_3)$ is taken as a maximal point. That is, at least two segments among $p_1q_1(\varepsilon_1)$, $p_2q_2(\varepsilon_2)$, and $p_3q_3(\varepsilon_3)$ cannot be shortened by decreasing their corresponding values of $\varepsilon_1$, $\varepsilon_2$, and $\varepsilon_3$, so that statements (1-3) are still satisfied. For example, the extreme cases of statement~(1) are when lines $\ell(p_1,p_2)$ and $\ell(q_1(\varepsilon_1),q_2(\varepsilon_2))$ are perpendicular.

%\clemens{no me queda claro, pero no he buscado mucho}
%\pablo{Esto creo está bien, sale de mirar que al menos dos puntos entre $q1(eps_1)$, $q2(eps_2)$ y $q3(eps_3)$ no pueden 'moverse' más porque dejarían de cumplirse las condiciones del itemlist de más arriba. Por ejemplo, el límite de la primera condición del itemlist es que las rectas $\ell(p_1,p_2)$ y $\ell(q_1(eps_1), q_2(eps_2))$ sean perpendiculares. En la versión journal deberíamos explicar más }

%\rodrigo{Pablo, podrías explicar un poco mas esto entonces?}

Assume w.l.o.g.\ that $\ell(\tilde{p}_1,\tilde{p}_2)$ is perpendicular to $\ell(p_1,p_2)$, and that $\ell(\tilde{p}_3,\tilde{p}_1)$ is perpendicular to $\ell(p_3,p_1)$ (see Figure~\ref{fig:3matching2}).

Let $s_{12}=\ell(p_1,p_2)\cap \ell(\tilde{p}_1,\tilde{p}_2)$, $s_{13}=\ell(p_1,p_3)\cap \ell(\tilde{p}_1,\tilde{p}_3)$, and let $s^*$ be the point of $\ell(\tilde{p}_1,\tilde{p}_3)$ such that $\ell(s^*,\tilde{p}_2)$ is perpendicular to $\ell(p_2,p_3)$.

By Thales' theorem, $s_{12},s_{13}\in \mathcal{C}_{p_1\tilde{p}_1}$, $s_{12}\in \mathcal{C}_{p_2\tilde{p}_2}$, and $s_{13}\in \mathcal{C}_{p_3\tilde{p}_3}$. If $\mathcal{C}_{p_1\tilde{p}_1}$ and $\mathcal{C}_{p_2\tilde{p}_2}$ are tangent at $s_{12}$, let $O=s_{12}$, otherwise let $O$ be the intersection point other than $s_{12}$ between $\mathcal{C}_{p_1\tilde{p}_1}$ and $\mathcal{C}_{p_2\tilde{p}_2}$.
By Lemma~\ref{lem:3circles}, we have that $O=\mathcal{C}_{p_1\tilde{p}_1}\cap \mathcal{C}_{p_2\tilde{p}_2}\cap \mathcal{C}_{p_3s^*}$.

If $\mathcal{C}_{p_2\tilde{p}_2}$ contains $s_{13}$, then we are done, 
%since $s_{13}$ is common to $D_{p_1\tilde{p}_1}$, $D_{p_2\tilde{p}_2}$, and $D_{p_3\tilde{p}_3}$.
since $s_{13}$ is also common to $D_{p_1\tilde{p_1}}$ and $D_{p_3 \tilde{p}_3}$.
 Hence, assume $\mathcal{C}_{p_2\tilde{p}_2}$ does not contain $s_{13}$. Under this assumption, by Lemma~\ref{lem:aux1} (used with points $A=p_1$, $B=p_2$, $P=s_{12}$, and $R=s_{13}$), $O$ and $p_2$ are in the same half-plane $\mathcal{H}$ bounded by $\ell(p_1,p_3)$.

Since $\tilde{p}_3$ is not to the right of $\ell(s^*,\tilde{p}_2)$ in the direction from $p_2$ to $p_3$, we have that $\tilde{p}_3$ is on the half-line $h\subset \ell(s^*,\tilde{p}_1)$ with apex $s^*$ and such that $h\cap \mathcal{H}$ is a half-line. By Lemma~\ref{lem:aux2}, $D_{p_3s^*}\cap \mathcal{H} \subseteq D_{p_3\tilde{p}_3}\cap \mathcal{H}$, and hence $O\in D_{p_3\tilde{p}_3}$, which implies that $O$ is common to $D_{p_1\tilde{p}_1}$, $D_{p_2\tilde{p}_2}$, and $D_{p_3\tilde{p}_3}$.
\end{proof}

\begin{theorem}\label{theo:main}
Given a set $R$ of $n\ge 2$ red points and a set $B$ of $n$ blue points, in any maximum matching of $R$ and $B$, the disks have a common intersection.
\end{theorem}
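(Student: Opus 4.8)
The plan is to deduce the theorem from Lemma~\ref{lem:3matching} by invoking Helly's theorem. Recall that Helly's theorem in the plane states that if every three members of a finite family of convex sets in $\mathbb{R}^2$ have a common point, then the whole family has a common point. Since every disk $D_{pq}\in D_\M$ is convex, it suffices to show that any three disks of $D_\M$ share a common point; the theorem then follows immediately for all $n\ge 3$, while the case $n=2$ is already covered by Lemma~\ref{lem:pairwise-intersection}.

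First I would record the key reduction, mirroring the observation made after Lemma~\ref{lem:pairwise-intersection} for pairs of matched points. Let $\M$ be a maximum matching of $R\cup B$, and fix any three of its pairs $(p_1,q_1),(p_2,q_2),(p_3,q_3)\in\M$. I claim that $\{(p_1,q_1),(p_2,q_2),(p_3,q_3)\}$ is a maximum matching of the six points $\{p_1,q_1,p_2,q_2,p_3,q_3\}$. Indeed, were there a matching of these six points whose sum of squared distances strictly exceeds $|p_1q_1|^2+|p_2q_2|^2+|p_3q_3|^2$, then replacing those three pairs in $\M$ by the improved ones would yield a matching of $R\cup B$ with strictly larger total $\sum_{(p,q)}|pq|^2$, contradicting the maximality of $\M$. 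Hence the restriction of $\M$ to any three pairs is itself maximum, and Lemma~\ref{lem:3matching} applies to give $D_{p_1q_1}\cap D_{p_2q_2}\cap D_{p_3q_3}\neq\emptyset$.

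With this in hand the proof concludes quickly. For $n\ge 3$, every three disks of $D_\M$ have a common point by the previous paragraph, so Helly's theorem yields a point common to all disks in $D_\M$. For $n=2$ the family consists of just two disks, which intersect by Lemma~\ref{lem:pairwise-intersection} (or by the stronger fact, already noted, that all disks of a maximum matching are pairwise intersecting). In either case the disks of $D_\M$ have a common intersection, as required.

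I do not expect a genuine obstacle here: essentially all the difficulty has been absorbed into Lemma~\ref{lem:3matching}, and the remaining argument is a standard Helly reduction. The only point demanding care is the restriction claim---that a maximum matching stays maximum on every subset of pairs---but this follows from the same exchange argument used earlier for two pairs, so no new ideas are needed. One should simply state explicitly that disks are convex, so that Helly's hypotheses are met, and treat the degenerate small case $n=2$ separately.
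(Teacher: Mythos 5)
Your proposal is correct and follows exactly the same route as the paper: observe that the restriction of a maximum matching to any three pairs is itself maximum (via the exchange argument), apply Lemma~\ref{lem:3matching} to every triple of disks, and conclude by Helly's theorem, with the $n=2$ case handled separately by Lemma~\ref{lem:pairwise-intersection}. Your write-up is in fact slightly more explicit than the paper's about why the restriction stays maximum and why Helly's hypotheses (convexity, finiteness) are met.
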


%\begin{theorem}\label{theo:main}
%Given a set $R$ of $n\ge 2$ red points and a set $B$ of $n$ blue points, there exists a matching $\M$ of $R\cup B$ such that $\bigcap_{(p,q)\in \M} D_{pq} \neq \emptyset$.
%\end{theorem}

\begin{proof}
Let $\M=\{(p_1,q_1),(p_2,q_2),\ldots,(p_n,q_n)\}$ be a maximum matching of $R\cup B$. If $n=2$, then $D_{p_1q_1}\cap D_{p_2q_2}\neq \emptyset$ by Lemma~\ref{lem:pairwise-intersection}, implying the theorem. Otherwise, if $n\ge 3$, for every different $i,j,k\in\{1,2,\ldots,n\}$ the matching $\{(p_i,q_i),(p_j,q_j),(p_k,q_k)\}$ must be maximum for $\{p_i,q_i,p_j,q_j,p_k,q_k\}$. Then, by Lemma~\ref{lem:3matching}, $D_{p_iq_i}\cap D_{p_jq_j}\cap D_{p_k,q_k}\neq \emptyset$. The result follows by Helly's theorem.
\end{proof}

\section{Matching points with other shapes}

We finish by observing that the fact that disks, and not other arbitrary shapes, are used to match the pairs of points is important.
As mentioned before, it is clear that simpler shapes such as line segments do not have the property of giving always a common intersection.
Furthermore, we observe that replacing circles by somewhat similar shapes,  such as hexagons or decagons, in general does not preserve the property.
We can adapt the definition of diametral disk to regular hexagons or decagons as follows.
Define the \emph{diametral hexagon} (resp. \emph{decagon}) of a pair of points as the smallest-area regular hexagon (resp. decagon) that contains both points on its boundary.

In Fig.~\ref{fig:kgons} we show a simple construction that consists of four points on the vertices of a square, alternating colors.
The point set in the construction has two different perfect matchings, which are symmetric.
However, when the matching shape is a $k$-gon for $k=6$ or $k=10$, both matchings result in disjoint $k$-gons, so there is no common intersection.
The same situation occurs  with any regular $k$-gon where $k = 4q+2$, for any integer $q \geq 1$.
Note also that, even though the construction is presented degenerate for simplicity (i.e., the four points are cocircular), it can be perturbed while keeping the two pairs of $k$-gons disjoint.

\begin{figure*}[t]
	\centering
		\includegraphics{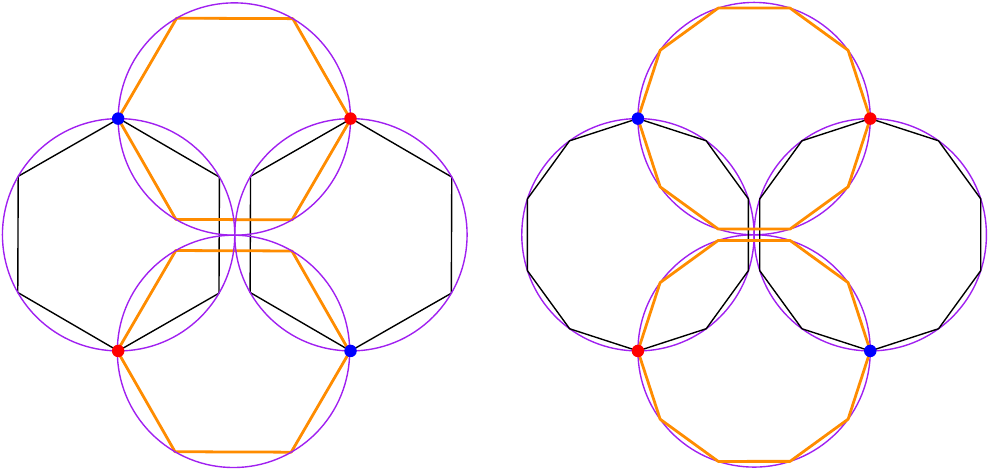}
	\caption{Construction with two pairs of points showing that using regular $k$-gons ($k=6$ on the left, $k=10$ on the right) instead of disks in many cases does not result in the property shown for disks: there is no common intersection.}
	\label{fig:kgons}
\end{figure*}

%\section{Future work}
%\rodrigo{Higher dimensions}

\small

\medskip
\noindent
\textbf{Acknowledgements.}
We thank the {\tt GeoGebra} open source software and its developers~\cite{gg}.
%\vspace{-0.6cm}
C.H., C.S., and R.S. were supported by projects Gen.\ Cat.\ 2017SGR1336, 2017SGR1640, and MINECO MTM2015-63791-R.
R.S. was also supported  by MINECO through the Ra{m\'o}n y Cajal program.
P.P.-L. was supported by projects CONICYT
FONDECYT/Regular 1160543 (Chile), and Millennium
Nucleus Information and Coordination in Networks ICM/FIC RC130003
(Chile).
\begin{figure}[h!]
\begin{minipage}[l]{0.3\textwidth}
\includegraphics[trim=10cm 6cm 10cm 5cm,clip,scale=0.15]{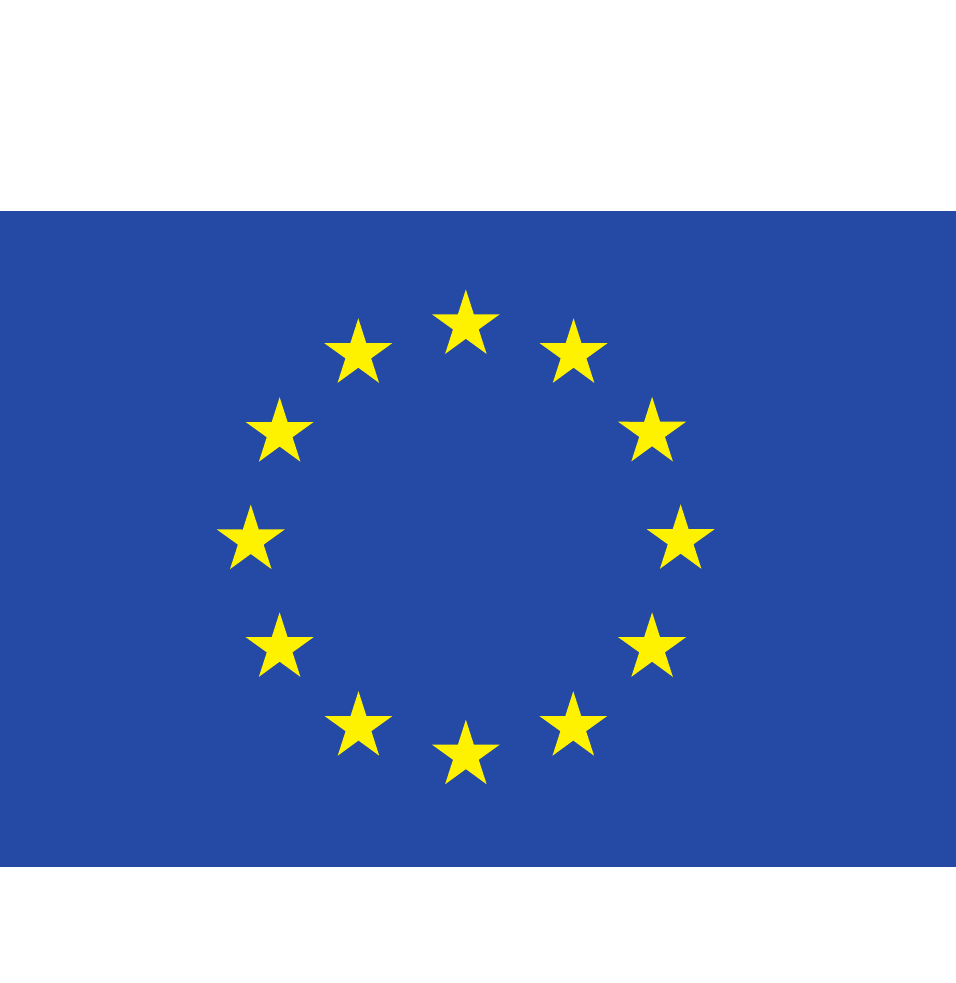}
\end{minipage}\hspace{-3.5cm}
\begin{minipage}[l][1cm]{0.9\textwidth}
This project has received funding from the European Union's Horizon 2020 research and innovation programme under the Marie Sk\l{}odowska-Curie grant agreement No 734922.
\end{minipage}
\end{figure}
%All authors have been further supported by H2020-MSCA-RISE project 734922 -- CONNECT.

%\begin{minipage}[l]{0.3\textwidth} \includegraphics[trim=10cm 6cm 10cm 5cm,clip,scale=0.15]{eu_logo} \end{minipage}  \hspace{-2cm} \begin{minipage}[l][1cm]{0.7\textwidth}
%      This project has received funding from the European Union's Horizon 2020 research and innovation programme under the Marie Sk\l{}odowska-Curie grant agreement No 734922.
% \end{minipage}

\bibliographystyle{elsarticle-num-names}

\end{document}